\documentclass[submission,copyright,creativecommons]{eptcs}

\usepackage{amsmath,amssymb,amsfonts,amsthm}
\usepackage{xspace}
\usepackage{xcolor}

\usepackage{tikz}
\usetikzlibrary{arrows,decorations.pathmorphing,positioning,petri}
\tikzset{
  node distance=40pt,
  every place/.style=
    {
      circle,
      draw,
      thick,
      inner sep=3pt,
      minimum size=6mm
    },
  every transition/.style=
    {
      rectangle,
      draw,
      thick,
      inner sep=3pt,
      minimum size=6mm
    },
  edge/.style=
    {
      ->,
      shorten <=1pt,
      >=stealth',
      semithick
    },
  inhibitor/.style=
    {
      -o,
      shorten <=1pt,
      >=stealth',
      semithick
    },
  dedge/.style=
    {
      <->,
      shorten <=1pt,
      >=stealth',
      semithick
    },
  state/.style=
    {
      circle,
      fill=black,
      inner sep=.1pt,
      minimum size=4pt
    }
}

\newcommand\eg{e.\,g.,\xspace}
\newcommand\ie{i.\,e.,\xspace}
\newcommand\cf{cf.\xspace}
\newcommand\etal{et al.\xspace}

\newcommand\PN{Petri net\xspace}
\newcommand\PNs{Petri nets\xspace}
\newcommand\LTS{\ensuremath{\mathsf{LTS}}\xspace}

\newcommand\Jcore{\ensuremath{\mathcal{J}_{core}}\xspace}

\newcommand\sect[1]{Sect.~\ref{#1}\xspace}

\newcommand\fig[1]{Fig.~\ref{#1}\xspace}
\newcommand\ex[1]{Example \ref{#1}\xspace}

\newcommand\defn[1]{Definition~\ref{#1}\xspace}

\newcommand\lemm[1]{Lemma~\ref{#1}\xspace}
\newcommand\prop[1]{Proposition~\ref{#1}\xspace}


\newcommand\pre[1]{\ensuremath{{}^\bullet #1}\xspace}
\newcommand\post[1]{\ensuremath{#1^\bullet}\xspace}

\newcommand\defset{\ensuremath{\mathcal{D}}\xspace}

\newcommand\fv{\ensuremath{\mathsf{fv}}\xspace}
\newcommand\dv{\ensuremath{\mathsf{dv}}\xspace}
\newcommand\rv{\ensuremath{\mathsf{rv}}\xspace}

\newcommand\Mstruct{\ensuremath{\mathsf{M}}\xspace}

\newcommand\push{\ensuremath{\downarrow}\xspace}
\newcommand\pop{\ensuremath{\uparrow}\xspace}

\newcommand\id{\ensuremath{\mathsf{id}}\xspace}

\newcommand\xlongmapsto[1]{\ensuremath{\overset{#1}{\longmapsto}}\xspace}
\newcommand\xlongrightarrow[1]{\ensuremath{\overset{#1}{\longrightarrow}}\xspace}


\newcommand\nil{\ensuremath{\mathsf{0}}\xspace}
\newcommand\jdef{\,\ensuremath{\mathsf{def}}\,}
\newcommand\jin{\,\ensuremath{\mathsf{in}}\,}
\newcommand\join{\,\ensuremath{\triangleright}\,}
\newcommand\angled[1]{\ensuremath{\langle #1 \rangle}\xspace}

\newcommand\parl{\ensuremath{\,|\,}}
\newcommand\parll{\ensuremath{\,|\,}}

\theoremstyle{plain}
\newtheorem{theorem}{Theorem}
\newtheorem{lemma}{Lemma}
\newtheorem{proposition}{Proposition}
\newtheorem{corollary}{Corollary}
\theoremstyle{definition}
\newtheorem{definition}{Definition}
\newtheorem{example}{Example}


\title{An Operational Petri Net Semantics for the Join-Calculus}
\author{Stephan Mennicke
\institute{Institute for Programming and Reactive Systems\\
TU Braunschweig, Germany}
\email{mennicke@ips.cs.tu-bs.de}
}

\begin{document}
\maketitle

\begin{abstract}
We present a concurrent operational Petri net semantics for the join-calculus, a
process calculus for specifying concurrent and distributed systems.
There often is a gap between system specifications and the actual 
implementations caused by synchrony assumptions on the specification side and
asynchronously interacting components in implementations. The join-calculus
is promising to reduce this gap by providing an abstract specification
language which is asynchronously distributable.
Classical process semantics establish an implicit order of actually 
independent actions, by means of an interleaving. So does the semantics of the join-calculus. 
To capture such independent actions, step-based semantics, \eg
as defined on Petri nets, are employed. Our Petri net semantics for the
join-calculus induces step-behavior in a natural way.
We prove our semantics behaviorally equivalent to the original 
join-calculus semantics by means of a bisimulation. We 
discuss how join specific assumptions influence an existing notion of 
distributability based on Petri nets.
\end{abstract}

\section{Introduction}\label{sect:introduction}
Specifications for distributed systems
usually employ synchrony assumptions to keep the modeling as simple as
possible. Properties of specifications cannot be reused for real 
implementations, because components in a distributed system run concurrently and
communicate in an asynchronous fashion. This leaves a gap between
specifications and implementations.

Process
calculi, \eg the $\pi$-calculus, concentrate
on the essential parts in system specifications, keeping in mind that
they represent actual systems. Therefore, they come with a syntax and
a semantics to describe the behavior of a system as precise as possible.
The asynchronous $\pi$-calculus, a restricted $\pi$-calculus, tries
to reduce the gap between system specifications and implementations.
By the asynchronous $\pi$-calculus, we are able describe asynchronously communicating systems, but implementations
still rely on hard to implement constructs, such as {\em rendezvous}
or {\em leader election} \cite{Milner1992}.

The join-calculus by Fournet and Gonthier \cite{Fournet1996}
is a process calculus equipped with a basic language and an abstract
notion of computation, the {\em reflexive chemical abstract machine}.
Fournet and Gonthier extend Berry and Boudol's {\em chemical
abstract machine} \cite{Berry1990} by explicit reaction sites -- similar to {\em locations} in
distributed systems -- and combine the concepts of restriction, reception
and recursion in one construct called a {\em join definition}. By join
definitions, they force receptors, \ie names which are used to receive
messages, to reside on one location. In contrast, $\pi$-calculus allows
the use of sent names as receptors (\cf {\em scope extrusion}) which
enables the calculus to describe the concept of {\em mobility}, but
makes distributed implementations of the calculus difficult.

Still, as many other process calculi, the join-calculus only comes with
an interleaving semantics which makes it hard to reason about
the distributed behavior of processes. Although the join-calculus is equipped
with a parallel composition operator, it is rather difficult to describe
independence of actions, whereas other models, such as Petri nets \cite{Petri1962}, describe
independence explicitly. Therefore, we present an operational Petri net
semantics for the join-calculus taking advantage of the parallel structure to
obtain a large degree of independence, \ie concurrency.

The general idea
of our Petri net semantics is inspired by the work of
Busi and Gorrieri \cite{Busi2009}, where they propose a Petri net
semantics with inhibitor arcs for the $\pi$-calculus. They decompose
a $\pi$ term into places and construct the nets by transition rules working
on decompositions. They solve scoping issues in the $\pi$-calculus by a
global renaming. Our semantics does not rely on such a renaming as we store
the message scopes in places. As in Busi and Gorrieri's semantics,
all necessary information is encoded in the initial decomposition
corresponding to initially marked places. We concentrate on the
core join-calculus which is not equipped with an explicit choice. Therefore, we
can also abandon inhibitor arcs from our semantics. In general,
our semantics yields infinite but 1-safe Petri nets. It also comes with a bisimulation result
to the original join-calculus semantics ensuring the correctness of our approach.

\PNs and \PN related formalisms have already
been used to describe the semantics of the join-calculus. Buscemi and
Sassone propose a type-theoretic approach by
suggesting a hierarchy on the syntax of the join-calculus \cite{Buscemi2001}. For each level,
they prove that, if a join-calculus term is typable, \ie is satisfying
a restriction on the syntax, then the Petri net of the join term they construct is
bisimilar to the original join-calculus semantics. They get place/transition
nets by restricting processes to top-level join definitions. To handle more
expressive join terms, they use colored, reconfigurable and dynamic Petri nets.
In our work, we cover full expressiveness of the join-calculus by an infinite construction.
Bruni \etal propose an event structure semantics for
the join-calculus \cite{Bruni2006}. Their main goal is to establish so called
{\em persistent graph grammars} as a tool to describe
name passing process calculi. They focus on an encoding from the
asynchronous $\pi$-calculus into persistent graph grammars. The unfolding
of the grammars yields event structures. For the join-calculus 
they yield event structures with empty concurrency relations. 
The semantics we propose includes concurrency by exploiting the parallel structure of a join term.
There are also more general approaches which do not give a semantics
for the join-calculus, but use the same ideas to obtain new Petri net classes.
Prominent examples are {\em mobile and dynamic Petri nets} by Asperti and
Busi \cite{Asperti2009} and {\em functional nets} by Odersky \cite{Odersky2000}.
Our approach does not aim at extending Petri nets or introducing new
extensions to \PN theory.

Unfortunately, our net semantics yields infinite nets which seems to make
it impossible to be useful for any real-world applications. Due to
nice structural properties of the nets, the semantics could be directly used
for any {\em unfolding based} techniques on Petri nets. One of such applications
is model-checking. In Petri net unfoldings \cite{Esparza2001}, it is not necessary to compute the potentially
infinite structure of the net, but make use of a finite representation called
{\em prefix}. In this paper, we want to investigate the join-calculus
in terms of distributability. Recent research \cite{Glabbeek2008,Schicke2011} suggest a
notion of distributed systems in terms of Petri nets and proved a
Petri net structure, which refers to symmetric confusion, to
be impossible to distribute. If our proposed semantics is reasonable
and correct, we may argue on the distributability of the calculus itself.

The rest of the paper is structured as follows.
\sect{sect:preliminaries} introduces the 
necessary notions for this paper including \PNs
(\sect{sect:petrinets}) and an overview of the join-calculus
(\sect{sect:join}). The following section
is concerned with the definition of our Petri net semantics for the
join-calculus and its correctness results. In \sect{sect:distributability},
we discuss a notion of distributability and how the join-calculus
influences it. In \sect{sect:conclusion}, we conclude our work and give 
some further research directions.

\section{Preliminaries}\label{sect:preliminaries}
In this section, we introduce the basic notions and concepts used in
our net semantics. First, we need the notion of multisets.

\begin{definition}[Multisets]
Let $A$ be a set. A {\em multiset $M$ over $A$} is a mapping from $A$ to $\mathbb N$.
For $a \in A$, $M(a) = 0$ iff $a \not\in M$. Otherwise $a \in M$. Two multisets
$M_1, M_2$ over $A$ can be unified by $\uplus$. $M_1 \uplus M_2$ is a
multiset where for each $a \in A$, $(M_1 \uplus M_2)(a) = M_1(a) + M_2(a)$.
\end{definition}


Whenever $f$ is a function from a set $A$ to a cartesian product $\prod_{i = 0}^n A_i$,
then we define the projections on the result of $f$ by $f^i := \pi_i \circ f$,
where $\pi_i$ is the projection function on the $i$th component of the
product.
\id denotes the {\em identity function} defined on any set.

In our semantics we need to store scopes for objects.
These scopes may be nested. To handle this nesting of scope we introduce
the notion of {\em stacks} -- a common data structure also used in compilers.
A stack may be empty ($\bot$) or filled with elements of an alphabet.
It is equipped with three operations. First, the {\em push} operation
adds an element on top of a stack. Second, the {\em top} operation
returns the top element of a stack. Last, the {\em pop} operation
removes the top element of a stack.

\begin{definition}[Stack]
Let $\Sigma$ be an alphabet. A {\em stack $s$ over $\Sigma$} is either 
$\bot$ or $s$ contains at least one element $e \in \Sigma$, 
\ie, $s = [ e, s' ]$, where $s'$ is a stack over $\Sigma$. The
set of all stacks over $\Sigma$ is denoted by $\mathcal S_\Sigma$.
The following operations are defined on $\mathcal S_\Sigma$.
\begin{itemize}
\item $\top : \mathcal S_\Sigma \to \Sigma$ denotes the top element of a stack $s$
with
$$s\top := \left\{\begin{array}{ccl}
\varepsilon && s = \bot \\
e && s = [ e, s' ]\text.
\end{array}\right.$$
\item $\push : \mathcal S_\Sigma \times \Sigma \to \mathcal S_\Sigma$
denotes the push operation. For a stack $s$ and a symbol $e$,
$s\push e := [ e, s ]$.
\item $\pop : \mathcal S_\Sigma \to \mathcal S_\Sigma$ denotes the
pop operation. For a stack $s$,
$$s\pop = \left\{\begin{array}{ccl}
\bot && s = \bot \\
s' && s = [ e, s' ]\text.
\end{array}\right.$$
\end{itemize}
\end{definition}

Instead of $[ e_1, [ e_2, [ \dots, [ e_n, \bot ] \dots ]]$ we write
$[e_1, e_2, \dots, e_n, \bot]$.

Labeled transition systems serve as the common semantic model of both formalisms, Petri nets
and the join-calculus. It consists of
three components, a set of states $Q$, a labeled relation between states $\to$
and a start state $q_0$. The labels for so called transitions are obtained
from some alphabet $\Sigma$.

\begin{definition}[\LTS]
A {\em labeled transition system (over $\Sigma$), \LTS}
is a triple, $( Q, \to, q_0 )$ where $Q$ is a set, $\to \subseteq Q \times \Sigma \times Q$,
and $q_0 \in Q$.
\end{definition}

In \cite{Glabbeek2001}, van Glabbeek gives a huge
collection of behavioral equivalences for \LTS{}s. {\em Bisimulation} is a very
strong equivalence taking the branching structure, \ie the structure
of decisions, of a system into account. As already mentioned, Petri nets
as well as the join-calculus have an \LTS semantics. Therefore, we
introduce the notion of bisimulation. Later in \sect{sect:correctness}
we will prove our semantics introduced in \sect{sect:pnsemantics} to
be {\em bisimilar} to the original semantics of the join-calculus.

\begin{definition}[Bisimulation]
Let $A_1 = ( Q_1, \to_1, q_1 )$ and $A_2 = ( Q_2, \to_2, q_2 )$ be
labeled transition systems over some alphabet $\Sigma$. A relation $\mathcal R \subseteq Q_1 \times Q_2$
is called a {\em bisimulation between $A_1$ and $A_2$} iff
\begin{itemize}
\item $(q_1, q_2) \in \mathcal R$,
\item if $(p, q) \in \mathcal R$ and $p \xrightarrow{a}_1 p'$, then
there exists $q' \in Q_2$ such that $q \xrightarrow{a}_2 q'$ and
$(p', q') \in \mathcal R$, and
\item if $(p, q) \in \mathcal R$ and $q \xrightarrow{a}_2 q'$, then
there exists $p' \in Q_1$ such that $p \xrightarrow{a}_1 p'$ and
$(p', q') \in \mathcal R$.
\end{itemize}
If such a relation exists, then $A_1$ and $A_2$ are {\em bisimilar}.
\end{definition}

\subsection{Petri Nets}\label{sect:petrinets}
{\em Petri nets} were first introduced by Carl Adam Petri \cite{Petri1962}. Petri nets are directed bipartite
graphs with places drawn as circles and transitions drawn as boxes.
Places and transitions are the nodes of a net. Directed edges called arcs, either connect 
places with transitions or transitions with places.
An example is depicted in \fig{example:app_semantics}.
We assume a universe of places denoted by $\mathcal P$. 
We later specify $\mathcal P$ to meet the purposes of our semantics. 
The set of net places is a subset of $\mathcal P$.
As in labeled transition systems we have a fixed alphabet
$\Sigma$ for transition labels representing the actions of a system.
In contrast to classical net definitions, we directly encode the set of arcs
into transitions.
\begin{definition}[Net]
The tuple $N = ( P, T )$ is called a {\em labeled net over $\Sigma$}
iff
\begin{itemize}
\item $P \subseteq \mathcal P$ is a set and
\item $T \subseteq 2^P \times \Sigma \times 2^P$.
\end{itemize}
\end{definition}
The label $\pi_2(t)$ of a transition $t$ is also referred to as $l(t)$.
Here, $l$ is implicitly given and not a part of the net definition.
The preset of a transition $t$ is denoted by $\pre t := \pi_1(t)$,
the postset of $t$ is denoted by $\post t := \pi_3(t)$. Pre- and postsets
of places are defined by $\pre p := \{ t \in T \,|\, p \in \post t \}$
and $\post p := \{ t \in T \,|\, p \in \pre t \}$. The arc relation
is obtained by $F = \{ (p, t) \in P \times T \,|\, p \in \pre t \}
\cup \{ (t, p) \in T \times P \,|\, p \in \post t \}$.

A net is called {\em finite} iff $(P \cup T)$ is finite. Otherwise, the net
is called {\em infinite}.


The potential state of nets is described by {\em markings}, which are 
multisets over the set of places. Tokens, drawn as black dots 
(\cf \fig{example:app_semantics}), represent the number of places in a 
marking. These states may change by {\em firing} transitions.
Transitions are {\em enabled} iff there is at least one token on
any {\em input place} $p \in \pre t$. An enabled transition may fire, 
which means that it {\em consumes} one token from each input place and 
{\em produces} one token on any output place $p \in \post t$. This 
procedure is formally defined by the {\em firing rule}.
\begin{definition}[Enabledness, Firing rule]
Let $N = ( P, T )$ be a net and let $m : P \to \mathbb N$ be a marking of $N$. A transition
$t \in T$ is {\em enabled under $m$}, written $m[t\rangle$, iff $m(p) > 0$ for all $p \in \pre t$.
An enabled transition $t \in T$ may {\em fire}. The {\em successor marking of $m$} by
firing $t$ is $m'$, written $m[t\rangle m'$, with
$$m'(p) = \left\{\begin{array}{ccccl}
m(p) & + & 1 && \text{if } p \in ( \post t \setminus \pre t ) \\
m(p) & - & 1 && \text{if } p \in ( \pre t \setminus \post t ) \\
m(p) &&&& \text{else.}
\end{array}\right.$$
\end{definition}
{\em Petri nets} are nets with an initial marking $m_0$ corresponding to
the start state of a net.
\begin{definition}[Petri net]
The triple $N = ( P, T, m_0 )$ is called a {\em Petri net} iff
$( P, T )$ is a net and $m_0$ is a marking of $( P, T )$.
\end{definition}

A marking $m$ is {\em reachable} in a net $N = ( P, T, m_0 )$ iff 
there exists a sequence of transitions $t_1, \dots, t_n$ ($t_i \in T$) such that 
$m_0 [ t_1 \rangle \dots [t_n \rangle m$.
The set of all reachable markings of $N$ is denoted by $Reach(N)$. By relating
reachable markings we derive an \LTS from a Petri net.
\begin{definition}{}
Let $N = ( P, T, m_0 )$ be a Petri net labeled over $\Sigma$.
The \LTS of $N$ is defined by $\LTS(N) := ( Reach(N), \xlongrightarrow{}, m_0 )$ where
\begin{itemize}
\item $\xlongrightarrow{} \subseteq Reach(N) \times \Sigma \times Reach(N)$ and
\item $( m, a, m' ) \in \xlongrightarrow{}$ iff there exists $t \in T$ with 
$m [ t \rangle m'$ and $l(t) = a$.
\end{itemize}
\end{definition}
Instead of $(m, a, m') \in \xlongrightarrow{}$ we often use the abbreviation
$m \xlongrightarrow a m'$.

\subsection{Join-Calculus}\label{sect:join}
The join-calculus \cite{Fournet1996} is a process algebra describing
the model of the {\em reflexive chemical abstract machine} based on Berry's and
Boudol's chemical abstract machine \cite{Berry1990}.
One of the reasons for the development of the join-calculus was the
difficulty to actually implement distributed CCS or distributed
$\pi$-calculus.
In comparison to the $\pi$-calculus by Milner \cite{Milner1992}, the
join-calculus combines restriction, recursion and reception in one 
construct called {\em join definition}, forcing receptors to reside
on one {\em location}. Hence, it is not possible to {\em extrude} a name and
use the same name for reception. Fournet and Gonthier \cite{Fournet1996} identified
a strict subset of the join-calculus which is
proven to be as expressive as the full calculus. This subset is called
{\em core join-calculus}. This section and our Petri net semantics is 
based on the core calculus.

For further notions, we assume an infinite set of names $\mathcal{N}$.
The syntax of the core join-calculus is defined in \fig{join:syntax}.
\nil stands for the {\em null process}, a process with no behavior. 
$x\angled{v}$ represents {\em output messages}.
As in the $\pi$-calculus, $x$ stands for the channel name and $v$ is a value
passed through $x$. The {\em parallel composition} of two processes $P$ 
and $Q$ is denoted by $P \parll Q$, where $P$ and $Q$ work independently. The last
syntactic element is the {\em definition}, $\jdef x\angled{u} \parll y\angled{v} \join Q \jin P$.
Definitions combine restriction, reception of names and recursion in one construct.
$x\angled{u} \parll y\angled{v}$ is called the {\em join-pattern}. 
$x\angled{u} \parll y\angled{v} \join Q$ is called the {\em join definition}
or, together with a process $P$, the {\em enclosing definition} of $P$. We denote the
set of all join definitions by $\defset$. $P$ is the {\em enclosed process}. 
The set of all core join terms is denoted by \Jcore.

\begin{figure}
\begin{center}
\begin{minipage}{13cm}
$$\begin{array}{rcc|c|c|c}
P & ::= & ~ \nil ~ & ~ x\angled{v} ~ & ~ P \parll P ~ & ~ \jdef x\angled{u} \parll y\angled{v} \join P \jin P
\end{array}$$
\end{minipage}
\end{center}
\caption{Syntax of \Jcore}\label{join:syntax}
\end{figure}


Variables in a join-term are partitioned into three categories which
are not necessarily disjoint. The {\em free variables} ($\fv$) are those
being visible to the environment. {\em Defined variables} ($\dv$) are
variables bound to a join definition, \ie those channels that are processed
by a definition. {\em Received variables} ($\rv$) are only locally bound
to new processes resulting from the application of join definitions.
These three sets are defined in \fig{join:vars} (\cf \cite{Fournet1998}).

\begin{figure}[p]
\begin{center}
\begin{minipage}{13cm}
$$\begin{array}{rcl}
\fv[ x\angled{v_1, \dots, v_n} ] & \overset{\mathsf{Def}}= & \{ x, v_1, \dots, v_n \} \\
\fv[ \jdef D \jin P ] & \overset{\mathsf{Def}}= & ( \fv[P] \cup \fv[D] ) \setminus \dv[D] \\
\fv[ P \parll P' ] & \overset{\mathsf{Def}}= & \fv[P] \cup \fv[P'] \\
\fv[ 0 ] & \overset{\mathsf{Def}}= & \emptyset \\\\
\fv[ {J} \join {P} ] & \overset{\mathsf{Def}}= & \dv[J] \cup ( \fv[P] \setminus \rv[J] ) \\
\dv[ {J} \join {P} ] & \overset{\mathsf{Def}}= & \dv[J]
\end{array}$$
$$\begin{array}{rclcrcl}
\dv[ x\angled{y_1, \dots, y_n} ] & \overset{\mathsf{Def}}= & \{ x \} && \dv[ J \parl J' ] & \overset{\mathsf{Def}}= & \dv[J] \uplus \dv[J'] \\
\rv[ x\angled{y_1, \dots, y_n} ] & \overset{\mathsf{Def}}= & \{ y_1, \dots, y_n \} && \rv[ J \parl J' ] & \overset{\mathsf{Def}}= & \rv[J] \uplus \rv[J']
\end{array}$$
\end{minipage}
\end{center}
\caption{Free, defined and received variables of \Jcore terms}\label{join:vars}
\end{figure}

We use $\sigma_\fv, \sigma_\dv, \sigma_\rv$ to denote a renaming on the set
of free, defined and received variables.

The core join-calculus has its roots in an abstract machine called the
{\em reflexive chemical abstract machine}. Instead of specifying a set
of reduction rules, the chemical abstract machine first defines a structural
congruence and, on top of that, there is only one reduction rule. In
process calculi this method is adopted to reduce the number of rules for a
structural operational semantics significantly. As we want to use the
structural operational semantics to define labeled transition systems
of the core join-calculus, we first need the structural congruence
of core join terms. The congruence defined in \fig{join:struct} is 
reduced to the core join-calculus (\cf \cite{Fournet1998}).

\begin{figure}
\begin{center}
\begin{minipage}{13cm}
$$\begin{array}{rclll}
P \parl 0 & \equiv & P \\
P \parl Q & \equiv & Q \parl P \\
( P \parl Q ) \parl R & \equiv & P \parl ( Q \parl R ) \\
P \parl \jdef D \jin Q & \equiv & \jdef D \jin {P \parl Q} && \text{if } \fv(P) \cap \dv(D) = \emptyset \\
\jdef {D} \jin {\jdef {D'} \jin {P}} & \equiv & \jdef {D'} \jin {\jdef {D} \jin {P}} && \text{if } \fv(D) \cap \fv(D') = \emptyset\\\\
\jdef {D} \jin {P} & \equiv & \jdef {D\sigma_\dv} \jin {P\sigma_\dv} && \text{if } \sigma_\dv \text{ injective}\\
\jdef {D} \jin {P} & \equiv & \jdef D\sigma_\rv \jin P && \text{if } \sigma_\rv \text{ injective}\\
\end{array}$$
\end{minipage}
\end{center}
\caption{Structural congruence on \Jcore}\label{join:struct}
\end{figure}

\begin{figure}
\begin{center}
\begin{minipage}{13cm}
\begin{center}
\begin{tabular}{lrcl}
($\textsc{Join}$) & $x\angled{s} \parl y\angled{t}$ & $\xrightarrow{x\angled u \parl y\angled v \join R}$ & $R[s/u, t/v]$ \\\\
($\textsc{React}$) & \multicolumn{3}{c}{$\dfrac{P \xrightarrow D P'}{\jdef D \jin P \xlongmapsto{D} \jdef D \jin P'}$} \\\\
\multicolumn{2}{c}{($\textsc{Par1}$) $\dfrac{P \xrightarrow D P'}{P \parl Q \xrightarrow D P' \parl Q}$} & \multicolumn{2}{c}{($\textsc{Par2}$) $\dfrac{P \xlongmapsto{D} P'}{P \parl Q \xlongmapsto{D} P' \parl Q}$} \\\\
\multicolumn{4}{c}{($\textsc{Jump1}$) $\dfrac{P \xrightarrow D P', \dv(D) \cap \fv(D') = \emptyset}{\jdef D' \jin P \xrightarrow D \jdef D' \jin P'}$} \\\\
\multicolumn{4}{c}{($\textsc{Jump2}$) $\dfrac{P \xlongmapsto{D} P'}{\jdef D' \jin P \xlongmapsto{D} \jdef D' \jin P'}$} \\\\
\multicolumn{2}{c}{($\textsc{Struct1}$) $\dfrac{P \xrightarrow D P', P \equiv Q}{Q \xrightarrow D Q'}$} & \multicolumn{2}{c}{($\textsc{Struct2}$) $\dfrac{P \xlongmapsto{D} P', P \equiv Q}{Q \xlongmapsto{D} Q'}$}
\end{tabular}
\end{center}
\end{minipage}
\end{center}
\caption{Labeled transition semantics of \Jcore}\label{join:semantics}
\end{figure}

From the structural congruence we observe that it does not matter what the exact
defined variables are. In consequence, we may rename them. We
thereby need to make sure that all occurrences of defined variables in the
enclosed process are renamed as well. Later, our semantics will keep track
of definitions. To make sure that there are no name clashes, we introduce
a minimal notion of {\em normality} on which we rely. Our normality criterion
is concerned with join definitions occurring in parallel, \ie definitions
$D_1, \dots, D_k$ in processes of the form,
$$\jdef D_1 \jin P_1 \parl \dots \parl \jdef D_k \jin P_k\text.$$

\begin{definition}[Normality of \Jcore]
We call a process $P \in \Jcore$ {\em normal} if for all definitions $D, D'$
occurring in parallel in $P$, it holds that $\dv(D) \cap \dv(D') = \emptyset$.
\end{definition}

We define the semantics of core join processes by their labeled transition
systems respecting the reduction semantics given by Fournet \cite{Fournet1998}.
In \fig{join:semantics}, we extended Fournet's semantics by an extra
type of labeled arrows which represent the $\tau$-labeled steps in
Fournet's semantics. $\xlongrightarrow{D}$ describes potential steps over $D$,
while $\xlongmapsto{D}$ describes actual reaction steps. 
We extended the original semantics to
make the \LTS of join comparable to the labeled net semantics we
propose in \sect{sect:semantics}.

\begin{definition}[\LTS of \Jcore]
Let $P \in \Jcore$. The {\em labeled transition system of $P$} is
$$\LTS(P) := ( \Jcore, \longmapsto, P )$$ 
where 
$\longmapsto \subseteq \Jcore \times \mathcal D \times \Jcore$
is the smallest relation respecting the structural operational
semantics in \fig{join:semantics}.
\end{definition}

In general, this labeled transition system is infinite and has unreachable
parts. The \textsc{Join} rule reveals potential reactions. The actual
reaction rule, \ie \textsc{React}, introduces the new arrow type.
Only if $P$ has a potential $D$ step to $P'$, then the reaction actually
takes place. For the remaining rules we have one for the potential
arrows and one for the reaction arrow. The \textsc{Par} rules work as expected.
A join definition can be skipped if a reaction has already taken place,
\ie \textsc{Jump}$_2$, or the potential step $D$ does not interfere with
other free variables, \ie as in \textsc{Jump}$_2$.
The \textsc{Struct} rules refer to the structural congruences as defined
in \fig{join:struct}. For a better understanding of the labeled
transition semantics we give two examples.

\begin{example}
Consider the process $P = \jdef x\angled{u} \parl y\angled v \join 
u\angled v \jin x\angled k \parl x\angled j \parl y\angled 2$.
For simplicity, we use the definition variable $D = x\angled u \parl y\angled v \join u\angled v$.
Intuitively, $P$ has two possible executions. First, $x\angled k$ and $y\angled 2$
react with $D$ or second,
$x\angled j$ and $y\angled 2$ react under $D$. In both cases,
one message $x\angled \_$ remains in the process. As $x\angled j \parl y\angled 2$
potentially react with $D$, the rule \textsc{Join} tells that
$x\angled j \parl y\angled 2 \xlongrightarrow D j\angled 2$. Now, \textsc{React}
can be directly applied, \ie $\jdef D \jin x\angled k \parl x\angled j \parl y\angled 2
\xlongmapsto D \jdef D \jin x\angled k \parl j\angled 2$. From there on,
there is no other step possible. The second execution can be obtained 
by the use of \textsc{Struct1}. We needed $x\angled k$ and $y\angled 2$
in parallel. Due to commutativity and associativity of the parallel
operator, this is possible. Therefore, by \textsc{Struct1} we obtain
$x\angled k \parl x\angled j \parl y\angled 2 \xlongrightarrow D k\angled 2 \parl x\angled j$.
Again, we may apply \textsc{React} to get the actual reaction, \ie
$\jdef x\angled u \parl y \angled v \join u\angled v \jin x\angled k \parl x\angled j \parl y\angled 2
\xlongmapsto D \jdef x\angled u \parl y\angled v \join u\angled v \jin k\angled 2 \parl x\angled j$.
These are the only $\longmapsto$-steps. So, the \LTS of $P$ is a choice
between the message $j\angled 2$ and $k\angled 2$.
\end{example}

In the last example we already saw how \textsc{Join}, \textsc{React}
and \textsc{Struct} are applied. The application of the \textsc{Par}
rules is as expected. The next example considers a process where both 
\textsc{Jump} rules are applied.

\begin{example}
Consider
$$P = \jdef x\angled u \parl y\angled v \join u\angled v \jin
\jdef a\angled v \join v\angled{} \jin x\angled a \parl y\angled 2\text.$$
Again, we abbreviate the definitions occurring in $P$, \ie $D_1 = x\angled u \parl y\angled v \join u\angled v$
and $D_2 = a\angled v \join v\angled{}$. In a first step, we need to 
identify the potential steps of $P$. Considering $P$, there is only
one potential step that matters, namely $x\angled a \parl y\angled 2
\xlongrightarrow {D_1} a\angled 2$. With that knowledge we can apply
\textsc{Jump1}, because $\dv(D_1) \cap \fv(D_2) = \emptyset$. This yields
the following arrow, $\jdef D_2 \jin x\angled a \parl y\angled 2 \xlongrightarrow {D_1} \jdef D_2 \jin a\angled 2$.
The \textsc{React} rule does the rest, \ie $\jdef D_1 \jin \jdef D_2 \jin x\angled a \parl y\angled 2
\xlongmapsto {D_1} \jdef D_1 \jin \jdef D_2 \jin a\angled 2$. We are
almost done. The \textsc{React} rule exhibits the next arrow,
$\jdef D_2 \jin a\angled 2 \xlongmapsto {D_2} 2\angled{}$. To transfer this result
to the whole process, we apply \textsc{Jump2}, \ie $\jdef D_1 \jin \jdef D_2 \jin a\angled 2
\xlongmapsto {D_2} \jdef D_1 \jin \jdef D_2 \jin 2\angled{}$.
\end{example}

Note that this example is similar to the one at the beginning of \sect{sect:semantics}.
For discussions on the distributability of the join-calculus in \sect{sect:distributability}, 
we need to mention the notion of {\em locality}. In the join-calculus,
receptors must reside on one location, \ie they cannot be extruded to
more than one location. Therefore,
a join definition $J \join P$ can be seen as such a location and hence, a
location function is implicitly given in core join. We assume each join
definition appearing in a join process, either directly or by reduction,
to constitute a location. This is an approximation, because system modelers might summarize several join definitions
to one location. To express this freedom, a distributed version of the
join-calculus has been developed. The {\em distributed
join-calculus} \cite{Fournet1996a} employs explicit location functions and comes with a fully abstract encoding into the join-calculus.
However, we concentrate on the core join-calculus. For later discussions,
we rely on the above mentioned assumptions on locality.

\section{Petri Net Semantics for Join}\label{sect:semantics}
The semantics operates in two steps. First, the join term is
decomposed into an initial set of places. Each place is equipped with
a message term of core join, \eg $x\angled{v}$, and
the scopes of $x$ and $v$, because both names may have their individual
scopes. \ex{example:scopes} shows the need for both scopes.
\begin{example}\label{example:scopes}
$$P = \jdef \underbrace{x\angled{v} \parl y\angled{w} \join v\angled{w}}_{D_1} \jin
\jdef \underbrace{a\angled v \join \nil}_{D_2} \jin x\angled a \parl y\angled 2\text.$$
In $P$, we have six names: $a, 2, x, y, v, w$. While $x$ and $y$ are defined
by $D_1$, $a$ is defined by $D_2$ and $2$ is free. The names $v$ and $w$
are received variables and do not occur in a message. Here $x\angled a$
has the same scope as $x$, but $a$ is scoped by $D_2$. So after a $D_1$
step, there is a message $a\angled 2$, which may react in $D_2$.
Therefore, each place is equipped with both, the scope of the sender
and the scope of the sent name.
\end{example}

The decomposition yields only places for message terms. Parallel
compositions and join definitions are represented in the net structure.

The second step of our semantics consists of applications of a transition
rule which makes use of the information stored in places. Given two 
places representing $x\angled a, y\angled 2$ in the example above, our 
transition rule ensures that there exists a transition, labeled by $D_1$, 
consuming from both places and producing to places that correspond to the 
right side of the reaction rule, \ie the decomposition of $v\angled{w}$, 
where $v$ is mapped to $a$ and $w$ to $2$. The just described decomposition
yields a place $a\angled 2$ which can react in $D_2$ producing no new messages. 
The Petri net representation of \ex{example:scopes} is depicted in 
\fig{example:app_semantics}.

\begin{figure}[h]
\begin{center}
\begin{tikzpicture}
\node[place,tokens=1] (xa) [label=below left:$x\angled a$] {};
\node[place,tokens=1] (y2) [below=of xa,label=below left:$y\angled 2$] {};
\node (h1) [right=of xa] {};
\node[transition,node distance=20pt,below=of h1] (D1) {$D_1$};
\node[place] (a2) [right=of D1,label=below:$a\angled 2$] {};
\node[transition] (D2) [right=of a2] {$D_2$};

\draw[edge] (xa) to (D1);
\draw[edge] (y2) to (D1);
\draw[edge] (D1) to (a2);
\draw[edge] (a2) to (D2);
\end{tikzpicture}
\end{center}
\caption{Petri net semantics of $P$ in \ex{example:scopes}}\label{example:app_semantics}
\end{figure}

Note that, although we exploit the parallel structure of a process, join
definition applications are only unfolded. Therefore, our semantics yields
in general infinite net representations.

\subsection{Operational Semantics}\label{sect:pnsemantics}
Our Petri net definitions in \sect{sect:petrinets} left two main points
open, which need to be defined in advance. First, the universe of places $\mathcal P$ and
second, the set of transition labels $\Sigma$.
As already mentioned in the last section, places are
triples. The first component is a join message, \eg $x\angled v$.
The second and third components are stacks over the set of join definitions
$\mathcal D$. The first stack represents the scope of the sender name,
the second stack that of the sent name.
$$\mathcal P := \{ x\angled v \,|\, x, v \in \mathcal N \} \times \mathcal S_{\mathcal D} \times \mathcal S_{\mathcal D}$$
denotes the universe of places.
Labels for transitions are join definitions, \ie $\Sigma := \mathcal D$.
In \fig{example:app_semantics} we have labeled each place with the
message it represents.

The decomposition function returning sets of places for core join terms
needs to be equipped with an auxiliary function to manage the name scoping.
In the following, such functions are referred to as $f$ or $f_\bot$.
$f$ maps names in $\mathcal N$ to names in $\mathcal N$ and stacks over
$\mathcal D$, \ie $f : \mathcal N \to ( \mathcal N \times \mathcal S_{\mathcal D} )$.
For $n \in \mathcal N$, $f^1(n)$ represents a certain renaming of $n$ 
(\cf \defn{def:transitionrule}). $f^2(n)$ stores the scope of $f^1(n)$.
Initially, we use $f_\bot$ with $f_\bot(n) := ( n, \bot )$ ($n \in \mathcal N$).

During the application of the decomposition, it is necessary to alter the 
scopes for names. For this purpose, we use a special function $g_n$ 
operating on any  
$f : \mathcal N \to ( \mathcal N \times \mathcal S_{\mathcal D} )$. 
This function shall reduce the stack of $n$ by one element.
$g_n(f) : \mathcal N \to ( \mathcal N \times \mathcal S_{\mathcal D} )$
works like $f$ if the parameter is not $n$. Otherwise, it returns what 
$f$ returns, but the stack component is reduced by one element, \ie
$$(g_n(f))(x) := \left\{\begin{array}{lcl}
( \id \times \pop ) \circ f(x) && x = n\text, \\
f(x) && \text{otherwise.}
\end{array}\right.$$

The decomposition function $dec$ is defined inductively over the
structure of core join processes.

\begin{definition}
The function $dec : ( \Jcore \times ( \mathcal{N} \to ( \mathcal N \times \mathcal{S}_\mathcal{D} ) ) ) \to 2^\mathcal{P}$ is called {\em decomposition
function}. For all $x, v \in \mathcal N$, $P, Q \in \Jcore$, $D \in \mathcal D$, and $f : ( \mathcal{N} \to ( \mathcal N \times \mathcal{S}_\mathcal{D} ) )$
the decomposition is defined by
$$\begin{array}{rcl}
(\nil, f) & \mapsto & \emptyset\text, \\\\
( x\angled{v}, f ) & \mapsto & \left\{\begin{array}{lcl}
dec( x\angled{v}, g_x(f) ) && f^1(x) \not\in \dv(f^2(x)\top) \\
dec( x\angled{v}, g_v(f) ) && f^1(v) \not\in \dv(f^2(v)\top) \\
\{ ( f^1[x\angled v], f^2(x), f^2(v) ) \} && \text{otherwise,}
\end{array}\right.\\\\
( P \parl Q, f ) & \mapsto & dec( P, f ) \uplus dec( Q, f )\text, \\\\
( \jdef{D} \jin {P}, f ) & \mapsto & dec( P, (\id \times \push D) \circ f )\text.
\end{array}$$
\end{definition}

Note that the decomposition always yields finite sets of places.
The \nil process yields the empty set of places. The result of the
decomposition also corresponds to markings. Here, the empty marking represents
exactly what we expect from the behavior of \nil, \ie no behavior.
The decomposition of the parallel operator is represented by the disjoint 
union of both components. So, even two equal messages running in
parallel are decomposed into two places. Therefore, we use the
equality symbol $=$ as {\em equality up to isomorphism}, when we refer to
decompositions or markings of the resulting nets, respectively.
In the decomposition of join definitions, we need to adjust the
renaming function $f$, which also handles the scoping of names. A
join definition is decomposed as $P$, but the renaming function is
extended by $(\id \times \push D)$, meaning, that each name now has
a new scope, in particular $D$ and all other definitions which were
already stored in $f$.

The decomposition of messages $x\angled v$ does the main work, because
it handles the scopes of $x$ and $v$. By several applications of $g_x$
and $g_v$, it assigns the correct scopes to the resulting place.
Note that we assume $n \in \dv(\bot)$ for all $n \in \mathcal N$.

The recursive application of $dec$ eventually terminates, because in
each step, the terms in the decomposition get smaller. Either a
parallel operator or a join definition is removed. Decompositions of
messages also terminate, as the stacks for sender and sent name are
reduced by one element as long as they are not empty or the queried
name occurs in the set of defined variables. One of the
two possibilities holds eventually.

Given a core join process $P$. The decomposition of $P$ yields the set
of initially marked places. The behavior of $P$ is not mapped to the
semantics yet. Instead of giving an algorithm to construct 
a net, we give a rule that must be satisfied by a Petri net to be 
the semantics of $P$. To reflect the labeled transition semantics
of the core join-calculus, we need to ensure that definitions can be
applied, \ie transitions may fire, if their preconditions are satisfied. 
Definitions have the form
${x\angled{u}\parl y\angled{v}}\join{Q}$, where a process must be able
to send messages over $x$ and $y$ to perform the definition, \ie create
a new process $Q$ instantiated with the received variables. As our places
carry the necessary scoping, we use that information in \defn{def:transitionrule}.
A transition consuming from the preconditions of a join definition it
represents is forced to produce to places to which another transition
does not produce. By this, we reach that places never branch backwards,
an important condition discussed later in \sect{sect:structure}. Furthermore,
a transition must not produce to the initially marked places. By this,
we obtain an acyclic structure, \ie bounded places. Indeed, the transition
rule and the nature of our decomposition function ensure our Petri net
semantics to yield {\em 1-safe Petri nets}.

\begin{definition}\label{def:transitionrule}
Let $N = ( P, T, m_0 )$ be a labeled Petri net over $(\mathcal P, \mathcal D)$.
$N$ satisfies the {\em transition rule} iff for every two places $p, q \in P$ with 
\begin{itemize}
\item $p = ( x\angled a, s, s_a )$, $q = ( y\angled b, s, s_b )$ and 
\item $s\top = {x\angled{u}\parl y\angled{v}} \join {R}$,
\end{itemize}
it holds that there exists a transition $t \in T$ with
\begin{itemize}
\item $t = ( \{ p, q \}, x\angled{u} \parl y\angled{v} \join R, P' )$,
\item $P' \cap m_0 = \emptyset$ and $\pre P' = \{ t \}$
\end{itemize}
where $P' = dec(R, f_t)$ and
$f_t : \mathcal N \to ( \mathcal N \times \mathcal S_{\mathcal D} )$
with for $n \in \mathcal N$
$$f_t(n) = \left\{\begin{array}{lcl}
( a, s_a ) && n = u\text, \\
( b, s_b ) && n = v\text, \\
( n, s ) && \text{otherwise.}
\end{array}\right.$$
\end{definition}

In the transition rule, renamings encoded in $f_t$ become important.
As it is possible to have equal names with different scopes, a reaction,
\ie a transition in our nets, needs to respect the scopes although the
names are equal. Therefore, we postponed the renaming in the decomposition
function to the end of the procedure. Consider \ex{example:renaming}
as an illustration.

\begin{example}\label{example:renaming}
$$Q = \jdef a\angled{k} \parl b\angled{k'} \join k\angled{} \parl k'\angled{} \jin b\angled{c} \parl \jdef c\angled{} \join \nil \jin a\angled{c}\text.$$
$Q$ contains two names $c$ with different scopes. The $c$ sent over $b$
is free in $Q$. The $c$ sent over $a$ is defined. Our construction
respects both $c$s via $f_t$. Instead of renaming the resulting process,
here $k\angled{} \parl k'\angled{}$, to $c\angled{} \parl c\angled{}$
first, we decompose the right side of a join definition and
apply the necessary renaming afterward. Therefore, our semantics is
able to distinguish both variables $c$.
\end{example}

Given a core join process $J$. To construct the Petri net semantics for
$J$, we begin with the set of initially marked places. This set corresponds
with the initial decomposition, \ie $dec( J, f_\bot )$. If there are no
applicable definitions in $J$, the net construction is finished. Otherwise, there must be
at least two places violating the just defined transition rule. In order
to satisfy the transition rule, we add a transition and a set of places as described
in \defn{def:transitionrule}. We repeat this procedure until the
net satisfies the transition rule. The resulting Petri net represents
the semantics of $J$.

\begin{definition}\label{Nj}
Let $J \in \Jcore$ be some core-join process. The Petri net
$N(J) = ( P, T, m_0 )$ represents the semantics of $J$ if it
is the smallest Petri net satisfying
\begin{enumerate}
\item $m_0 = dec(J, f_\bot) \subseteq P$ and
\item the transition rule.
\end{enumerate}
\end{definition}

In this section, we have already seen an example (\ex{example:scopes})
and its Petri net semantics in \fig{example:app_semantics}.
Note that the procedure described above yields exactly those nets
satisfying \defn{Nj}. The criterion asking for the {\em smallest} net
ensures that dead transitions and isolated places are left out.

\subsection{Structural Properties}\label{sect:structure}
In this section, we investigate the net class of our Petri net semantics,
\ie {\em 1-safe} Petri nets. This net class restricts all places to
contain at most one token for any reachable marking, especially the
initial marking. As our decomposition function relies on disjoint unions,
initial markings in our nets {\em are 1-safe}.

In order to show the net class, we prove the following properties, 
also valid for {\em occurrence nets} \cite{Nielsen1979}.

\begin{proposition}\label{prop:occurrence}
Let $J \in \Jcore$ be a process. $N(J) = ( P, T, m_0 )$
satisfies the three criteria below.
\begin{enumerate}
\item For all $p \in m_0$ it holds that $\pre p = \emptyset$.
\item For all $p \in P$ it holds that $|\pre p| \leq 1$.
\item $F^+$ (transitive closure of $F$) is irreflexive.
\end{enumerate}
\end{proposition}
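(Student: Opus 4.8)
The three criteria essentially say that $N(J)$ is an occurrence net, and all three follow directly from the way $N(J)$ is built up by repeated applications of the transition rule in \defn{def:transitionrule}. The proof proceeds by induction on the net-construction procedure described just before \defn{Nj}: we start with the net $(m_0, \emptyset, m_0)$ where $m_0 = dec(J, f_\bot)$, which vacuously satisfies all three criteria (no transitions, hence empty presets and empty $F$), and we show each criterion is preserved whenever a single transition $t = (\{p,q\}, D', P')$ with $P' = dec(R, f_t)$ is added to repair a violation of the transition rule.

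\emph{Criterion 1.} We must show no transition produces onto an initially marked place, i.e. $p \in m_0 \Rightarrow \pre p = \emptyset$. This is immediate from the clause $P' \cap m_0 = \emptyset$ in \defn{def:transitionrule}: every freshly added transition $t$ has $\post t = P' \subseteq P \setminus m_0$, so no $t$ ever has an output place in $m_0$. Hence the preset of any place in $m_0$ stays empty throughout the construction.

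\emph{Criterion 2.} We must show $|\pre p| \le 1$ for every place $p$. When $t$ is added with $\post t = P'$, the clause $\pre{P'} = \{t\}$ in \defn{def:transitionrule} forces every $p \in P'$ to have $p \notin \post{t'}$ for any other transition $t'$ in the (final) net; in particular $p$ was not already a postset element of a previously-added transition, and no later transition will produce onto $p$ either. Here one has to argue that the "smallest net" requirement of \defn{Nj} is what makes the construction well-defined: a place introduced as a fresh output of $t$ is genuinely new (not equal to an existing place), so $|\pre p|$ goes from $0$ to exactly $1$ when $t$ is added and never increases again. The main subtlety — and the step I expect to be the real obstacle — is arguing that the places produced by $dec(R, f_t)$ are indeed distinct from all previously generated places and from $m_0$, despite the "equality up to isomorphism" convention on decompositions: one must check that the scope stacks $s_a, s_b, s$ threaded into $f_t$ strictly extend (or differ from) those of any earlier place, so that the decomposition genuinely yields fresh places and the condition $\pre{P'} = \{t\}$ is actually satisfiable by the smallest net. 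This uses that each nested join definition pushes a symbol onto the stacks via $(\id \times \push D)$ in $dec$, so reaction products carry strictly deeper scope information than their sources.

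\emph{Criterion 3.} We must show $F^+$ is irreflexive, i.e. the net is acyclic. I would prove this by exhibiting a rank function on $P \cup T$ that strictly decreases along $F$. Assign rank $0$ to all places in $m_0$; when a transition $t$ with $\pre t = \{p,q\}$ is added, give $t$ a rank strictly greater than $\max(\operatorname{rank}(p), \operatorname{rank}(q))$, and give every place in $\post t = P'$ rank $\operatorname{rank}(t)+1$. Criterion 1 guarantees $\post t \cap m_0 = \emptyset$ so this is consistent, and Criterion 2 guarantees each non-initial place gets its rank from a unique transition. Since every arc in $F$ goes from lower to strictly higher rank, no directed cycle can exist, so $F^+$ is irreflexive. (Alternatively, one can bundle this into the structural induction: adding $t$ and $P'$ to an acyclic net keeps it acyclic because $t$'s inputs are old places and $t$'s outputs are brand-new places with empty postsets, so no new cycle through $t$ is possible; the freshness claim from Criterion 2 is again what is doing the work.)

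Once all three criteria hold, $N(J)$ has the shape of an occurrence net, and together with the fact that $m_0$ is a set (1-safe by construction, since $dec$ uses disjoint unions and the initial marking assigns at most one token per place) the standard argument gives 1-safety of every reachable marking, as claimed in the surrounding text.
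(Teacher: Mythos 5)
Your proof is correct and follows essentially the same route as the paper's: all three criteria are read off from the two clauses $P' \cap m_0 = \emptyset$ and $\pre{P'} = \{t\}$ of the transition rule together with the minimality requirement of \defn{Nj}, and your worry about freshness of the places in $dec(R,f_t)$ is resolved exactly as you suspect, by the "equality up to isomorphism" convention that lets the smallest net pick fresh copies so that $\pre{P'}=\{t\}$ is satisfiable. Your rank-function packaging of criterion 3 is a slightly cleaner formalization than the paper's direct no-new-cycles argument, but it rests on the same observation that new transitions consume only from old places and produce only to brand-new ones.
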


The first property states that there are no transitions in the net
producing tokens to initially marked places in $m_0$. The second states 
that there is always one and only one reason, \ie a transition,
that produces a token to a place. The last one is concerned with cycles
in the net structure.

\begin{proof}
Let $J \in \Jcore$ be a process and $N(J) = ( P, T, m_0 )$ its
\PN semantics.
\begin{enumerate}
\item\label{prf:one} We need to show that for all initially marked places,
\ie $p \in m_0$, it holds that their presets are empty. As $N(J)$ needs
to fulfill the transition rule (\defn{def:transitionrule}), there is no 
transition $t \in T$ with $\pre t \neq \emptyset$ and $\post t \cap m_0 \neq \emptyset$.
If there are transitions $t$ with $\pre t = \emptyset$ producing to
$m_0$, then $N(J)$ is not the smallest net after \defn{Nj}. Therefore,
there is no transition producing the $m_0$ and in consequence, the claim holds.
\item\label{prf:two} We need to show that for all places $p \in P$,
there is at most one transition $t \in \pre p$. By \defn{def:transitionrule}, $N(J)$
needs to satisfy the transition rule. From \ref{prf:one} we know
that the claim holds for initially marked places. For any other place $p$,
we need to show that there are no two transition $t, t' \in T$ with
$p \in \post t \cap \post {t'}$. From the transition rule we follow
that $P_1 = \post t$ and $P_2 = \post {t'}$. The transition rule also
ensures that $\pre {P_1} = \{ t \}$ and $\pre {P_2} = \{ t' \}$. If
$p$ was in $P_1$ and in $P_2$, then $P_1 = P_2$ and in consequence 
$t = t'$. Therefore, $|\pre p| \leq 1$.
\item We need to show that there are no cycles in our net representations.
By the net construction, we prove that our nets do not introduce cycles.
Starting with the set of initial places, the transition rule can only
introduce transitions producing to places which are not initially marked.
Otherwise, this would contradict \ref{prf:one}. Let $p$ be an arbitrary
place in the net. From some place in $m_0$ to $p$ are no cycles in the net.
Let $Q$ be the set of all places between  $m_0$ and $p$.
A transition $t$ consuming from $p$ produces to a set of places $P'$.
We need to show that $P'$ is disjoint from $Q$. Assuming, $P' \cap Q \neq \emptyset$. So, there
is a place $q \in Q$ which is also in $P'$. $q$ cannot be in the set
of initially marked places. Therefore, there exists a transition $t_q$
producing to $q$. Now, $\pre q = \{ t_q, t \}$ which contradicts \ref{prf:two},
unless $t \neq t_q$. Therefore, $F^+$ is irreflexive.
\end{enumerate}
\end{proof}

\prop{prop:occurrence} enables us to show that our net semantics 
produces 1-safe Petri nets. We use the fact that 
$\max\{ m_0(p) ~|~ p \in P \} \leq 1$, for all $J \in \Jcore$ with
$N(J) = ( P, T, m_0 )$. Furthermore, we have already
proven that there are no cycles in our net semantics and for each
place, there is at most one transition producing to it. Therefore, we
can formulate the following corollary.

\begin{corollary}\label{coroll:1safe}
Let $J \in \Jcore$ be a process. Then $N(J)$ is 1-safe.
\end{corollary}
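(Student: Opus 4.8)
The plan is to derive 1-safeness directly from \prop{prop:occurrence}, which already establishes exactly the structural features that make an acyclic net behave like an occurrence net. First I would observe that by \prop{prop:occurrence}(1)--(3), $N(J) = (P, T, m_0)$ has the shape of an occurrence net: every place has at most one input transition, no transition produces onto an initially marked place, and $F^+$ is irreflexive (no cycles). Combined with the fact that the initial marking $m_0 = dec(J, f_\bot)$ is a \emph{set} of places (the decomposition uses disjoint unions, so $\max\{m_0(p) \mid p \in P\} \leq 1$), this should suffice.

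The core argument I would give is an induction on the length of a firing sequence $m_0[t_1\rangle m_1 [t_2\rangle \cdots [t_n\rangle m_n$, showing that every reachable marking $m$ satisfies $m(p) \leq 1$ for all $p \in P$. The base case is the observation above about $m_0$. For the inductive step, suppose $m$ is 1-safe and $m[t\rangle m'$. By the firing rule, $m'(p)$ can exceed $m(p)$ only for $p \in \post t \setminus \pre t$, where it increases by exactly one. So I need: for every $p \in \post t \setminus \pre t$, $m(p) = 0$. Here is where \prop{prop:occurrence} does the work: by part (2), $\pre p = \{t\}$ (it is nonempty since $t \in \pre p$, and has size $\leq 1$), so $t$ is the unique transition that ever produces a token on $p$. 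I then argue $m(p) = 0$ as follows: if $m(p) > 0$, then a token reached $p$ along the firing sequence leading to $m$; that token was either in $m_0$ --- impossible, since by part (1) $\pre p = \emptyset$ for $p \in m_0$, contradicting $\pre p = \{t\}$ --- or it was produced by firing some $t_i$ with $p \in \post{t_i}$, forcing $t_i = t$. But then $t$ was already enabled and fired before reaching $m$, and since by part (3) the flow relation is acyclic, $t$ cannot be re-enabled: consuming from $\pre t$ and producing on $\post t$ along an acyclic structure means no token can flow back to any place of $\pre t$, so $t$ fires at most once in any firing sequence. This contradicts $t$ firing again from $m$. Hence $m(p) = 0$, and $m'(p) \leq 1$. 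For places not in $\post t \setminus \pre t$, the value does not increase, so the bound is preserved.

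The step I expect to be the main obstacle is making precise the claim that \emph{each transition fires at most once in any firing sequence}, which is the crux connecting acyclicity (\prop{prop:occurrence}(3)) to boundedness. The subtlety is that after $t$ fires, some place in $\pre t$ might still carry a token produced by yet another transition --- but no, again part (2) forbids this: each place in $\pre t$ has a unique producer, and if that producer is some $t' \neq t$, then once $t$ has consumed the token, $t'$ would have to fire again to re-enable $t$, pushing the problem up the (finite, acyclic by part (3)) causal chain until we reach a place in $m_0$, which by part (1) has no producer at all. Thus there is no way to replenish $\pre t$. I would phrase this carefully as a secondary lemma or an inline induction on the causal depth of $t$ in $F^+$, using that $F^+$ is a strict partial order (irreflexive and, being a transitive closure, transitive). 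Once that is in place, the 1-safeness argument closes cleanly and \coroll{coroll:1safe} follows.
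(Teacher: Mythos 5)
Your argument is correct and follows essentially the same route as the paper, which simply asserts that the corollary ``follows directly from \prop{prop:occurrence}'' using the same three ingredients you invoke: $m_0$ is a set, each place has at most one producing transition, and $F^+$ is irreflexive. Your induction on firing sequences, including the key lemma that each transition fires at most once, is precisely the detail the paper leaves implicit.
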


The proof follows directly from \prop{prop:occurrence}. For further
discussions we introduce the notions of causality, conflict and
independence on the basis of Petri nets.

\begin{definition}
Let $N = (P, T, m_0)$ be a Petri net and $t_1, t_2 \in T$. $t_1$ and
$t_2$ are said to be in {\em causal order}, $t_1$ before $t_2$, iff
there is a reachable marking $m_1$ with $m_1[t_1\rangle m_2$ and a reachable
marking $m_3$ from $m_2$ with $m_3[t_2\rangle$ but no such markings which
enable $t_2$ first. $t_1$ and $t_2$ are {\em in direct conflict} iff 
$\pre t_1 \cap \pre t_2 \neq \emptyset$. Two nodes $n_1, n_2 \in P \cup T$
are {\em in conflict} iff there exist two transitions $t, t' \in T$
which are in conflict and there exist paths from $t$ to $n_1$ and from
$t'$ to $n_2$. If $n_1 = n_2$, then $n_1$ is in {\em self-conflict}.
$t_1$ and $t_2$ are {\em independent} (or {\em concurrent}) iff
they are neither in a causal order nor in conflict.
\end{definition}

Intuitively, the notion of independence describes actions, \ie transitions,
which can always occur in parallel.
There is a
remaining property of occurrence nets which is not satisfied by our nets, namely
irreflexivity of the conflict relation. This property states that there
are no self-conflicting nodes in the net.

The join-calculus semantics relies on the structural congruences of
\fig{join:struct}. Therefore, our net semantics needs to reflect them
in a proper way. Indeed, there is a provable
correspondence between the structural congruences of the core join-calculus
and the Petri net representations. We prove that if two join terms
are structurally congruent, then their net representations are isomorphic.

\begin{lemma}\label{lemma:iso}
Let $P, Q \in \Jcore$ be processes with $P \equiv Q$. Then $N(P)$ and
$N(Q)$ are isomorphic.
\end{lemma}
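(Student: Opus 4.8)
The plan is to prove the lemma by induction on the derivation of $P \equiv Q$, \ie by showing that for each axiom of the structural congruence in \fig{join:struct} the nets $N(P)$ and $N(Q)$ are isomorphic, and then closing under reflexivity, symmetry, transitivity and congruence. Reflexivity is trivial, symmetry follows because isomorphism is symmetric, and transitivity follows because composition of isomorphisms is an isomorphism. For the congruence closure (if $P \equiv Q$ then $\jdef D \jin P \equiv \jdef D \jin Q$, and similarly under parallel composition) I would observe that $dec$ is defined compositionally: $dec(\jdef D \jin P, f) = dec(P, (\id \times \push D) \circ f)$ and $dec(P \parl Q, f) = dec(P,f) \uplus dec(Q,f)$, so an isomorphism of the sub-nets extends to an isomorphism of the whole, since the transition rule (\defn{def:transitionrule}) only inspects the message term and the two scope-stacks stored in a place, all of which are preserved by the extension.

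The core of the argument is therefore the base case: one isomorphism $\psi$ for each of the six axioms. For $P \parl 0 \equiv P$ we have $dec(0,f) = \emptyset$, so the initial markings coincide on the nose and $\psi = \id$ works. For commutativity and associativity of $\parl$, the initial markings are equal as \emph{multisets} of places (since $\uplus$ is commutative and associative); recalling the paper's convention that $=$ on decompositions means equality up to isomorphism, the place sets of the two nets are related by a bijection $\psi$ that is the identity on the underlying triples, and since the transition rule is determined solely by the triples, $\psi$ lifts canonically to transitions and added places, giving the net isomorphism. The scope-extrusion axiom $P \parl \jdef D \jin Q \equiv \jdef D \jin {P \parl Q}$ (under $\fv(P) \cap \dv(D) = \emptyset$) is the first interesting case: on the left $dec(P,f) \uplus dec(Q, (\id\times\push D)\circ f)$, on the right $dec(P, (\id\times\push D)\circ f) \uplus dec(Q, (\id\times\push D)\circ f)$; one must check that pushing $D$ onto the stacks of names in $P$ does not change $dec(P,\cdot)$, which holds precisely because no free name of $P$ is defined by $D$, so the extra $D$ never becomes the top-of-stack that the message-case of $dec$ tests against $\dv(\cdot)$. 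The definition-swap axiom is analogous, using $\fv(D)\cap\fv(D') = \emptyset$ so that the order of the two pushes is irrelevant for every name actually occurring. Finally, the two $\alpha$-renaming axioms ($\sigma_\dv$ and $\sigma_\rv$ injective) require a genuinely non-identity $\psi$: renaming defined or received variables changes the first component $x\angled v$ of the affected places and, for defined variables, also the definition terms $D$ stored on the stacks (hence the transition labels); the isomorphism is the bijection induced by applying $\sigma$ uniformly to messages, stacks and labels, and one checks that $f_t$ in \defn{def:transitionrule} commutes with this renaming, so the transition rule is satisfied by the image net.

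I expect the main obstacle to be the two scope-manipulation axioms together with the $\alpha$-conversion cases, because there one must reason carefully about the interaction between the stack operations $\push$, $\pop$ performed by $dec$ and $g_n$ and the condition $f^1(x)\in\dv(f^2(x)\top)$ that decides when the recursion in the message case stops — in particular one needs the invariant that only names that are \emph{actually sent} in $P$ ever reach the termination test, so that hypotheses like $\fv(P)\cap\dv(D)=\emptyset$ suffice. A secondary subtlety is purely bookkeeping: because $N(P)$ is defined as the \emph{smallest} net satisfying the transition rule (\defn{Nj}), I must argue that the image of $N(P)$ under $\psi$ again satisfies the transition rule and is again smallest, so that it must equal $N(Q)$; this follows since $\psi$ is a bijection preserving presets, postsets and labels, hence preserves both the transition-rule conditions and the absence of dead transitions and isolated places.
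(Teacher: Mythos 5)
Your overall strategy --- induction on the derivation of $P \equiv Q$ with one explicit isomorphism per axiom of \fig{join:struct}, closure under reflexivity, symmetry, transitivity and the congruence rules, and a final reconciliation with the minimality requirement of \defn{Nj} --- is the natural one (the paper defers the proof to its technical report, so no line-by-line comparison is possible, but this is the expected route). There is, however, a concrete false step in your treatment of the two scope-manipulation axioms. For $P \parl \jdef D \jin Q \equiv \jdef D \jin {P \parl Q}$ you assert that $dec(P,f) = dec(P,(\id\times\push D)\circ f)$ because every free name of $P$ pops past the extra $D$. That is only true for messages of $P$ not enclosed by a definition nested \emph{inside} $P$. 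If $P$ contains $\jdef D_1 \jin x\angled{v}$ with $x \in \dv(D_1)$, the popping for $x$ stops at $D_1$, and the place produced on the right-hand side carries the stack $[D_1, D, \dots]$ while the corresponding place on the left carries $[D_1, \dots]$: the two decompositions are genuinely different sets of places, not equal. The same phenomenon occurs for the definition-swap axiom, where a name defined by $D$ retains $D'$ buried in its stack on one side but not on the other.

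Consequently the isomorphism in these cases is not the identity on triples; it must delete (or reorder) an entry sitting strictly below the top of certain stacks, and you then owe an argument that this buried entry is inert: it never surfaces as the $s\top$ queried by the transition rule, and it never captures a name during the decomposition $dec(R,f_t)$ of any reaction product reachable in the net. This is exactly where the side conditions $\fv(P)\cap\dv(D)=\emptyset$ and $\fv(D)\cap\fv(D')=\emptyset$ are consumed, via an invariant on the free names of reachable reaction products --- you gesture at this invariant in your closing paragraph, but your base-case argument as written would let you conclude equality of markings where only a nontrivial bijection exists, and it skips the propagation of the ``buried entry is inert'' property through the recursive construction of transitions. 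A related strengthening is needed for the congruence-closure step: the induction hypothesis should quantify over the scoping function ($N$ built from $dec(P,f)$ and from $dec(Q,f)$ are isomorphic for every $f$ arising from a context), because the per-axiom isomorphisms are invoked under the $f$ supplied by the enclosing context, not only under $f_\bot$.
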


The proof can be found in the technical report to the paper \cite{MennickeTR2012}. \lemm{lemma:iso}
also has a side effect to the following behavioral correspondence. We
will show a bisimulation between core join terms and their net representations.
One of the proof steps is concerned with structurally congruent join terms.
As isomorphisms imply bisimulation \cite{Glabbeek2001}, we can assume it as already proven
by \lemm{lemma:iso}.

\subsection{Behavioral Properties}\label{sect:correctness}
In this section, we will prove that the semantics we presented is correct
with respect to bisimulation. We already saw \LTS interleaving semantics
for both, \PNs and the join-calculus. The states of an \LTS for a \PN
is described by markings. States of core join \LTS are core join terms.
We need to find a bisimulation $\mathcal R \subseteq \Jcore \times 2^{\mathcal P}$.
Note that any subset of $\mathcal P$ describes a valid marking of a
Petri net of a core join term.

Our bisimulation result relies on the observation, that our decompositions
yield valid markings of a net describing the semantics of a core join term. Each state of a process $P$ is represented
by its initial decomposition $dec(P, f_\bot)$. When $P$ evolves to $P'$,
then our Petri net semantics reflects this behavior by a step from
$dec(P, f_\bot)$ to $dec(P', f_\bot)$, because all join definitions of
$P$ are preserved by $P'$ and so, they remain on some stack in the
decomposition of $P'$. Conversely, if our net evolves
from $dec(P, f_\bot)$ to $m$, then this $m$ must be equivalent to some
$dec(P', f_\bot)$, \ie there is a step from $P$ to $P'$. We need to 
prove that this is actually true for all $P \in \Jcore$.

Using the just described observation, we formulate a base bisimulation
as follows,
$$\mathcal R := \left\{ \left( P, dec( P, f_\bot ) \right) \,|\, P \in \Jcore \right\}\text.$$
When considering a process $P$, then we restrict $\mathcal R$ to the
reachable parts of $P$, denoted by $R_P := \mathcal R \upharpoonright_{P \longmapsto^*}$.

\begin{theorem}\label{theorem:interleaving}
Let $P \in \Jcore$. Then $\LTS(P)$ and $\LTS(N(P))$ are bisimilar.
\end{theorem}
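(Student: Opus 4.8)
The plan is to exhibit the relation $R_P := \mathcal R \upharpoonright_{P \longmapsto^*}$ and verify the two transfer conditions of bisimulation directly, using the decomposition function $dec$ as the bridge between join terms and markings. The root pair $(P, dec(P,f_\bot))$ lies in $R_P$ by construction, and the initial marking $m_0$ of $N(P)$ is exactly $dec(P,f_\bot)$ by \defn{Nj}, so the base condition is immediate. The real content is the step-matching, and I would organize it around a structural-decomposition lemma: for any $Q \in \Jcore$, the places in $dec(Q, f_\bot)$ that carry messages over a defined channel are precisely those pairs $(x\angled a, s, s_a)$ with $s\top = J \join R$ and $x\angled u$ (or $y\angled v$) a component of the join-pattern $J$; moreover a matched pair of such places corresponds exactly to a redex $x\angled s \parl y\angled t$ reacting under $s\top$ in $Q$ (up to $\equiv$). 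This is really an analysis of how $dec$ pushes definitions onto the stacks as it descends through $\jdef D \jin {-}$ and how the message clause strips the stack down to the enclosing definition.

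For the forward direction, suppose $(Q, dec(Q,f_\bot)) \in R_P$ and $Q \xlongmapsto{D} Q'$. By the \textsc{React}/\textsc{Jump}/\textsc{Par}/\textsc{Struct} rules, this reaction is witnessed (up to $\equiv$) by a redex in which two messages $x\angled s, y\angled t$ react under $D = x\angled u \parl y\angled v \join R$, producing $R[s/u,t/v]$ in place of the two messages, with all enclosing definitions untouched. By the decomposition lemma these two messages correspond to places $p,q \in dec(Q,f_\bot)$ with the shape required in \defn{def:transitionrule}, so $N(P)$ contains a transition $t = (\{p,q\}, D, P')$ with $P' = dec(R, f_t)$. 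Firing $t$ removes $p,q$ and adds $P'$; I must then show the resulting marking equals $dec(Q', f_\bot)$ up to isomorphism. This amounts to checking that decomposing $R[s/u,t/v]$ with $f_\bot$ gives the same set of places (up to the isomorphism identifying fresh place-names) as decomposing $R$ with $f_t$ and then embedding into the surrounding stacks — i.e. that the postponed renaming of \ex{example:renaming} commutes with decomposition. Here \lemm{lemma:iso} is exactly the tool that lets me pass through the $\equiv$-steps (\textsc{Struct1/2}) freely, since congruent terms have isomorphic nets. For the backward direction, suppose $dec(Q,f_\bot)[t\rangle m$ in $N(P)$ with $l(t) = D$. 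By the transition rule $t$ was installed because of a pair of places $p,q$ of the prescribed shape; the decomposition lemma reads these back as a redex $x\angled s \parl y\angled t$ under $D$ inside $Q$ (modulo $\equiv$), so \textsc{Join} plus \textsc{React} (transported by \textsc{Jump}, \textsc{Par}, \textsc{Struct}) gives $Q \xlongmapsto{D} Q'$ with $dec(Q', f_\bot) = m$ by the same commutation argument, and $(Q',m) \in R_P$ since $Q'$ is reachable from $P$.

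The main obstacle I expect is the commutation claim: that $dec\big(R[s/u, t/v], f_\bot\big)$ coincides, up to the place-name isomorphism, with the set $dec(R, f_t)$ produced by the transition rule, together with its correct re-embedding into the stacks of the ambient definitions. Two subtleties drive this. First, $f_t$ substitutes the scoped names $(a, s_a)$ and $(b, s_b)$ for the received variables while leaving other names with the ambient stack $s$, so one must verify that this yields exactly the scopes that $dec$ would compute from scratch on the syntactically substituted term — in particular that a received variable instantiated to a defined name lands on the right enclosing definition, and a received variable instantiated to a free name gets $\bot$. Second, one must argue that the places added by the transition rule, which are required to be fresh ($P' \cap m_0 = \emptyset$, $\pre{P'} = \{t\}$ — see \prop{prop:occurrence}), are interchangeable with the nominally-different places arising in $dec(Q',f_\bot)$; this is where the convention that $=$ on markings means equality up to isomorphism, already flagged in \sect{sect:pnsemantics}, does the work. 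A clean way to isolate this is a substitution lemma for $dec$ stating $dec(R\sigma, f) = dec(R, f \circ \sigma)$ under suitable injectivity of $\sigma$ (matching the side conditions on $\sigma_\dv, \sigma_\rv$ in \fig{join:struct}), after which the theorem follows by a routine induction on the derivation of $\xlongmapsto{D}$ in each direction.
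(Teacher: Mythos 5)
Your proposal follows exactly the route the paper sets out: it uses the relation $R_P = \mathcal R\upharpoonright_{P\longmapsto^*}$ built from the pairs $(Q, dec(Q,f_\bot))$, invokes \lemm{lemma:iso} to discharge the \textsc{Struct} rules, and correctly isolates the commutation of $dec$ with the substitution $[s/u,t/v]$ (realized by $f_t$, up to isomorphism of markings) as the main technical lemma. The paper defers the detailed verification to its technical report, but your outline matches the argument it sketches in \sect{sect:correctness}.
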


The proof can be found in the technical report to this paper \cite{MennickeTR2012}.

\section{Distributability Issues in the Join Calculus}\label{sect:distributability}
One of the advantages of Petri net semantics for process calculi is
the inherent notion of independence. A set of independent actions, \ie
the labels of independent transitions, is called a {\em step}.
A step is enabled if all its transitions are enabled. An enabled step may
fire. The resulting marking is the same marking as if all transitions
in a step fired in a sequence. Therefore, if we consider an \LTS
construction in terms of Petri net steps, we do not get more states, but more transitions,
because independent actions are summarized in multisets.

The induced steps on the semantics of the core join-calculus correspond
to independent join definition applications.
Chains of join definitions are are translated into sequences of transitions.
Our net semantics also recognizes definition chains which are actually
independent, due to the fact that our Petri net semantics respects
the structural congruence (\cf \lemm{lemma:iso}).

\begin{figure}
\begin{center}
\begin{tikzpicture}
\node[place,tokens=1,label=left:$p$] (p) at (-1,0) {};
\node[place,tokens=1,label=left:$q$] (q) at (1,0) {};
\node[transition,label=below:$t_1$] (a) at (-2,-1.5) {};
\node[transition,label=below:$t_2$] (b) at (0,-1.5) {};
\node[transition,label=below:$t_3$] (c) at (2,-1.5) {};

\draw[edge] (p) to (a);
\draw[edge] (p) to (b);
\draw[edge] (q) to (b);
\draw[edge] (q) to (c);
\end{tikzpicture}
\end{center}
\caption{The confusion pattern $\mathsf{M}$}\label{fig:M}
\end{figure}
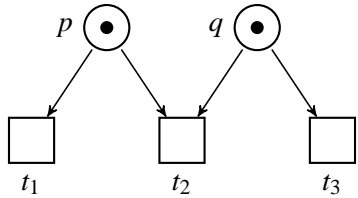

Steps enable our semantics to argue about the distributability of the
join-calculus, or more precisely, about the distributability of our
net representations of the join-calculus. We are interested in a particular
{\em confusion} pattern which is depicted in \fig{fig:M}. This structure
is called \Mstruct. The \Mstruct was introduced by van Glabbeek \etal
as a structure which has a major influence to the {\em distributability} of a system
\cite{Glabbeek2008,Glabbeek2009}. A net is distributable if there exists
a behaviorally equivalent net which is {\em distributed}. Van Glabbeek \etal
call a system distributed if
\begin{itemize}
\item it consists of components on different locations,
\item the components work concurrently,
\item the components interact explicitly, and
\item communication between components is asynchronous.
\end{itemize}
They formalized those criteria in a Petri net class called {\em LSGA nets} --
locally sequential, globally asynchronous nets. The crucial point of
LSGA nets is that parallel transitions are not allowed to be on one location
while transitions sharing input places must share one.
The \Mstruct is not distributed as all transitions need to reside on one location,
but $t_1$ and $t_3$ may fire in a step, \ie in parallel. Van Glabbeek \etal
proved that if a net contains a fully reachable \Mstruct, \ie there is
a reachable marking containing at least the places in \fig{fig:M}, then the Petri net is
not distributable up to branching-time equivalences \cite{Glabbeek2008}.
Schicke-Uffmann \etal prove that the \Mstruct is not distributable
in terms of causality respecting equivalences \cite{Schicke2011}.
Their arguments depend on the chosen notion of distributed systems and
distributability. However, we consider these notions as reasonable,
because the described points above are important phenomena occurring in distributed
system design and implementation.

Therefore, if we identify such a structure in our net semantics, there
is a potential restriction on the distributability of the join-calculus, \ie
join-calculus processes.

\begin{example}\label{example:M}
Consider the following process,
$$P = \jdef \underbrace{x\angled u \parl y\angled v \join u\angled v}_{D} \jin x\angled a \parl y \angled 1 \parl x\angled b \parl y\angled 2\text.$$

The \PN semantics $N(P)$ of process $P$ is depicted in \fig{fig:jM}.
$N(P)$ contains four \Mstruct{s} as depicted in \fig{fig:M}.
Initially, the process makes a choice between four different
join definition applications. After one application, there is only
one possibility for the resulting process to apply the join definition
again. Our \PN semantics reflects this behavior.

\begin{figure}
\begin{center}
\begin{minipage}{13cm}
\begin{center}
\begin{tikzpicture}
\node[place,tokens=1] (xa) at (0,1.5) [label=above:$x\angled{a}$] {};
\node[place,tokens=1] (y1) at (-2,0) [label=above right:$y\angled{1}$] {};
\node[place,tokens=1] (xb) at (0,0) [label=above:$x\angled{b}$] {};
\node[place,tokens=1] (y2) at (2,0) [label=above left:$y\angled{2}$] {};

\node[place] (a1) at (-3,-3) [label=left:$a\angled 1$] {};
\node[place] (b1) at (-1,-3) [label=left:$b\angled 1$] {};
\node[place] (b2) at (1,-3) [label=left:$b\angled 2$] {};
\node[place] (a2) at (3,-3) [label=left:$a\angled 2$] {};

\node[transition] (t1) at (-3,-1.5) {$D$};
\node[transition] (t2) at (-1,-1.5) {$D$};
\node[transition] (t3) at (1,-1.5) {$D$};
\node[transition] (t4) at (3,-1.5) {$D$};
  
\draw[edge,bend right] (xa) to (t1);
\draw[edge] (y1) to (t1);
\draw[edge] (t1) to (a1);

\draw[edge] (y1) to (t2);
\draw[edge] (xb) to (t2);
\draw[edge] (t2) to (b1);

\draw[edge] (y2) to (t3);
\draw[edge] (xb) to (t3);
\draw[edge] (t3) to (b2);

\draw[edge,bend left] (xa) to (t4);
\draw[edge] (y2) to (t4);
\draw[edge] (t4) to (a2);
\end{tikzpicture}
\end{center}
\end{minipage}
\end{center}
\caption{Petri net semantics of $P$ in \ex{example:M}.}\label{fig:jM}
\end{figure}
\end{example}

The net semantics of the process in \ex{example:M} yields an \Mstruct 
just like the one in \fig{fig:M}. It is fully reachable, as
the initial marking enables all four \Mstruct{s}. We observe that all transitions
are labeled by the same definition.
Considering the notion of locality for the join-calculus (\cf \sect{sect:join}),
this structure remains on one location, although it contains
independent transitions. This fact makes a distributability result
of the join-calculus incomparable to the results in \cite{Glabbeek2008},
because van Glabbeek \etal forbid such structures on one location.
On the other hand, the implicit location function given by join definitions
gives reason to extend the notion of distributability.

In the following, we refer to an \Mstruct where all transitions are
labeled by the same definition as {\em local}.
If all \Mstruct{s} in the join-calculus were local, then the join-calculus
would be a distributable process calculus, because our net semantics respects
the behavior of the join-calculus and van Glabbeek \etal prove that
a Petri net with no fully reachable \Mstruct is distributable \cite{Glabbeek2012}.
The following proposition gives proof for this hypothesis.

\begin{proposition}
Let $J \in \Jcore$. If $N(J)$ contains a fully reachable \Mstruct,
then it is local.
\end{proposition}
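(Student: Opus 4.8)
The plan is to analyze the structure of an arbitrary fully reachable \Mstruct{} in $N(J)$ by reading off, from the transition rule (\defn{def:transitionrule}), what the labels of its three transitions must be, and to show they coincide. Recall that an \Mstruct{} consists of two places $p, q$ and three transitions $t_1, t_2, t_3$ with $\pre{t_1} \cap \pre{t_2} \ni p$ and $\pre{t_2} \cap \pre{t_3} \ni q$, while $t_1$ and $t_3$ are independent. So it suffices to prove $l(t_1) = l(t_2) = l(t_3)$. By symmetry it is enough to show $l(t_1) = l(t_2)$; the other equality follows identically from the pair $(q, t_2, t_3)$.

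First I would invoke \defn{def:transitionrule} together with \defn{Nj}: since $N(J)$ is the \emph{smallest} net satisfying the transition rule, every transition of $N(J)$ arises from that rule, hence has the form $t = (\{p', q'\}, D, P')$ where $p' = (x\angled a, s, s_a)$, $q' = (y\angled b, s, s_b)$ share their first stack component $s$, and $D = s\top$ is precisely $x\angled u \parl y\angled v \join R$. The key point is that the label of such a transition is completely determined by the common stack $s$ of its two input places — it equals $s\top$. Now consider the place $p$ shared by $t_1$ and $t_2$. Writing $p = (m_p, s_p, s'_p)$, both $t_1$ and $t_2$ have $p$ in their preset, so by the transition rule each of them pairs $p$ with a second input place carrying the \emph{same} stack $s_p$ in its first component, and each is labeled $s_p\top$. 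Therefore $l(t_1) = s_p\top = l(t_2)$. Applying the same argument to $q$, the place shared by $t_2$ and $t_3$, gives $l(t_2) = s_q\top = l(t_3)$. Hence all three transitions carry the label $s_p\top = s_q\top$, and the \Mstruct{} is local.

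The main obstacle I anticipate is a subtlety in the transition rule itself: it specifies that for \emph{every} pair of places $p', q'$ with matching first-stack components and $s'\top = {x\angled u \parl y\angled v}\join R$ there \emph{exists} a transition consuming from exactly $\{p', q'\}$, but one must check that a transition of $N(J)$ cannot have a preset of size other than two, nor a preset whose two places disagree on the first stack component — i.e. that the transition rule is not only a lower bound but, by minimality of $N(J)$, pins down the exact shape of every transition. I would argue this from \defn{Nj}: any transition not of the prescribed form could be deleted without violating clauses (1)--(2) of \defn{Nj}, contradicting smallness. A second, more cosmetic point is that the definition of \Mstruct{} only requires $p \in \pre{t_1}\cap\pre{t_2}$ and $q\in\pre{t_2}\cap\pre{t_3}$ as drawn in \fig{fig:M}, where in fact each $t_i$ has exactly the depicted input places; once the ``exact shape'' lemma above is in hand, the presets of $t_1,t_2,t_3$ are forced to be $\{p,\cdot\}$, $\{p,q\}$, $\{q,\cdot\}$ respectively, so $t_1$ and $t_2$ share the stack of $p$ and $t_2,t_3$ share the stack of $q$, and since $t_2$'s two input places must agree on their first stack component we get $s_p = s_q$, closing the argument.
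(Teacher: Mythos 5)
Your proposal is correct and follows essentially the same route as the paper: both arguments rest on the observation that, by the transition rule and the minimality of $N(J)$, every transition's label equals $s\top$ for the stack $s$ shared by its input places, so two transitions sharing an input place must carry the same label, which propagates from $t_1$ to $t_2$ via $p$ and from $t_2$ to $t_3$ via $q$. Your explicit justification that minimality pins down the exact shape of every transition is a point the paper leaves implicit ("by construction"), but it is not a different method.
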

\begin{proof}
We prove the claim by contradiction. Let $J \in \Jcore$ be a process and $N(J) = ( P, T, m_0 )$
be the Petri net semantics of $J$. Assuming \fig{fig:M} is a part of $N(J)$
and each transition has a different label, \ie $l(t_i) \neq l(t_j)$ for
$i \neq j$ and $i, j = 1, 2, 3$. From the transition rule, it follows that all preplaces
of a transition have the same stack in their second component. Especially,
the top element of these stacks is equal to the label of the transition.
Reconsider \fig{fig:M}. As $p \in \pre t_1$, we know that $p = ( \_, s, \_ )$
with $s\top = l(t_1)$. $p \in \pre t_2$, so $p = ( \_, s', \_ )$ with
$s'\top = l(t_2)$. But, by construction, this is not possible if $l(t_1) \neq l(t_2)$.
Therefore, either $t_1, t_2$ do not exist or $l(t_1) = l(t_2)$.
The case of $t_2, t_3$ is analogous, \ie $l(t_2) = l(t_3)$. By transitivity,
we have $l(t_1) = l(t_3)$.
\end{proof}

It is not possible to have an \Mstruct with different transition 
labels, \ie on different locations, in the join-calculus. The proof steps make use of a property
of the join-calculus which is reflected by our \PN semantics. This
property is concerned with the assignment of messages to join definitions,
\ie the number of transitions with different labels in the postset of
a place. For each join message, there is at most one applicable join
definition.

Van Glabbeek \etal \cite{Glabbeek2008,Glabbeek2009,Glabbeek2012} and
Schicke-Uffmann \etal \cite{Schicke2011} consider
unlabeled nets with no explicit location function to derive their distributability
results. If we consider the join-calculus as a distributable process
calculus, then it is a natural step to evaluate their results given the
assumptions of the join-calculus. Best and Darondeau \cite{Best2011}
already consider a given allocation function in their survey paper to
argue on the distributability of Petri nets.

\section{Conclusion}\label{sect:conclusion}
In this paper we presented an operational Petri net semantics for the
join-calculus. We proved that our semantics corresponds to structural
congruences and the labeled reduction semantics of the calculus.
Furthermore, we investigated issues of distributability in the join-calculus.

In future work, we want to understand how an explicit location function,
as implied by the join-calculus, influences the results of 
\cite{Glabbeek2008,Schicke2011}. Moreover, we would like to investigate 
optimizations of the semantics to possibly reach finite net 
representations of join terms.
The mentioned applications in unfolding based techniques is not discussed
in this paper. As our suggested semantics has an unfolding nature, it
is worthwhile to apply such techniques to the join-calculus by first
using our semantics to compute the necessary prefixes of a join term.

\paragraph{Acknowledgments.} The author gratefully thanks the DFG
(German Research Foundation) for financial support. Moreover, he wishes 
to thank Malte Lochau and the anonymous reviewers for their useful 
comments on the paper. Further acknowledgments go to Ursula Goltz, 
Uwe Nestmann, Kirstin Peters, and Jens-Wolfhard Schicke-Uffmann for 
valuable discussions.

\bibliographystyle{eptcs}
\bibliography{references}


\end{document}